\begin{document}

\title{On the hardness of distance oracle for sparse graph}
\author{Hagai Cohen \and Ely Porat\thanks{This work was supported by BSF and ISF}}
\institute{Department of Computer Science, Bar-Ilan University, 52900 Ramat-Gan, Israel
\email{\{cohenh5,porately\}@cs.biu.ac.il}}

\maketitle

\begin{abstract}
In this paper we show that \emph{set-intersection} is harder than
distance oracle on sparse graphs.
Given a collection of total size $n$ which consists of $m$ sets drawn from universe $U$,
the \emph{set-intersection} problem is to build a data structure
which can answer whether two sets have any intersection.
A distance oracle is a data structure which can answer distance queries on a given graph.
We show that if one can build distance oracle for sparse graph $G = (V, E)$,
which requires $s(|V|, |E|)$ space and answers
a $(2-\epsilon, c)$-approximate distance query in time $t(|V|, |E|)$
where $2-\epsilon$ is a multiplicative error and $c$ is a constant additive error,
then, \emph{set-intersection} can be solved in $t(m+|U|, n)$ time using $s(m+|U|, n)$ space.
\end{abstract}

\section{Introduction and Related Work}
Let $G = (V,E)$ be a graph.
The all-pairs shortest paths problem (APSP) requires
to construct a data structure for a given graph $G$
so that the exact distance between every two vertices on that graph can be retrieved efficiently.
This problem is one of the most fundamental graph problems of computer science.
Despite the importance of this problem,
there is still no efficient solution for it using less than $O(|V|^2)$ space.
When the graph is dense, i.e., when $|E|=O(|V|^2)$, this space is not much.
But, for sparse graphs where $|E|=O(|V|)$ this is extremely a lot of space.

Thorup and Zwick \cite{mikkel01} explored an alternative for the APSP problem.
They introduced a solution for the approximate distance oracle,
which is a data structure that answers approximate distance queries in a graph.
They achieved that, for any integer $k \geq 1$, an undirected weighted graph with $n$ vertices and $m$ edges can be preprocessed in expected $O(kmn^{1/k})$ time to construct a data structure of size $O(kn^{1+1/k})$ that can answer any $(2k-1)$-approximate distance query in $O(k)$ time.
This means that the distance oracle answers distance queries with multiplicative error of $2k-1$.

In this paper we show by a reduction from the set intersection problem
that it is hard to build a
$(2-\epsilon, c)$-approximate distance oracle
where $2-\epsilon$ is a multiplicative error and $c$ is a constant additive error.

In the set intersection problem, we are given a collection of sets which we can preprocess.
Then, given two sets we need to answer quickly whether there is any intersection between the sets.
This is a common problem in many fields,
especially in retrieval algorithms and search engines.
The formal definition of the problem is as follows:
\begin{definition}
Let $D$ be a database consisting of a collection of $m$ sets drawn from universe $U$,
$S_1, \ldots, S_m \subseteq U$.
Denote $n$ to be the input size, i.e., $n = \sum_{i=1}^m |S_i|$.
The \emph{set intersection} problem is to build a data structure
that given a query of two indices $i,j \leq m$,
can answer if sets $S_i$ and $S_j$ have any intersection.
\end{definition}

Cohen and Porat \cite{hagai10} showed how the \emph{set intersection} problem
can be solved in $O(\sqrt{n})$ query time using $O(n)$ space.
Their solution is based on dividing the sets in the database $D$ to large and non-large sets,
where they define a large set to be a set which has more than $\sqrt{n}$ elements.
They construct a set intersection matrix for the large sets in $D$,
which is a matrix saving for each pair of sets if there is any intersection between them.
They showed that the number of large sets is at most $\sqrt{n}$,
thus, this matrix costs $\sqrt{n} \times \sqrt{n} = O(n)$ bits space.
Moreover, for each set in $D$ they store a static hash table to retrieve in $O(1)$ time
if an element belongs to that set or not.

Given a query consisting of two indices $i,j$,
if both $S_i$ and $S_j$ are large sets,
the answer can be retrieved from the set intersection matrix in $O(1)$ time.
Otherwise, one of the sets is a non-large set, i.e., it has less than $\sqrt{n}$ elements.
On this case, the answer can be retrieved by going over all the elements of the smaller set,
checking for each one of them if it belongs to the other set in $O(1)$ time.
Because non-large sets have at most $O(\sqrt{n})$ elements, this takes at most $O(\sqrt{n})$ time.

This solution can be easily extended to a tunable solution.
If we define a large set to be a set with more than $t$ elements,
the number of large sets can be at most $\frac{n}{t}$ sets.
Thus, the set intersection matrix costs $O(\frac{n^2}{t^2})$ space.
Hence, this problem can be answered in $O(t)$ query time
using $O(\frac{n^2}{t^2})$ space.

In this paper we show a reduction from the \emph{set intersection} problem
to distance oracle on sparse graphs.
In Sect.~\ref{sec:Set Intersection Reduction} we show that if one can build a distance oracle
using $s(|V|, |E|)$ space with $t(|V|, |E|)$ query time,
which answers $(2-\epsilon)$-approximate distance queries,
the \emph{set intersection} problem can be solved
in $t(m+|U|, n)$ query time using $s(m+|U|, n)$ space.
In Sect.~\ref{sec:Distance Oracle with Constant Additive Error} we extend the reduction
to a $(2-\epsilon)$-approximate distance oracle with constant additive error.

\section{Set Intersection Reduction}\label{sec:Set Intersection Reduction}

In the next theorem we claim that if one can build a distance oracle
that answers $(2-\epsilon)$-approximate distance queries,
the \emph{set intersection} problem can be solved.

\begin{theorem}\label{thm:basic reduction}
Let $G = (V, E)$ be a sparse graph.
Given a distance oracle that answers $(2-\epsilon)$-approximate distance queries
using $s(|V|, |E|)$ space with $t(|V|, |E|)$ query time,
we can solve the set intersection problem using $s(m+|U|, n)$ space with $t(m+|U|, n)$ query time.
\end{theorem}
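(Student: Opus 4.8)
The plan is to encode the set system as a bipartite graph on which approximate distances between ``set vertices'' reveal intersections. Concretely, I would build a graph $G=(V,E)$ with one vertex $u_i$ for each set $S_i$ ($1\le i\le m$) and one vertex $v_e$ for each element $e\in U$, and put an edge $\{u_i,v_e\}$ exactly when $e\in S_i$. Then $|V|=m+|U|$ and $|E|=\sum_{i=1}^{m}|S_i|=n$ (distinct sets contribute distinct edges even when they share elements, since $u_i\neq u_j$), so for a set system that is sparse this is a sparse graph, and feeding it to the assumed oracle costs $s(m+|U|,n)$ space while a query costs $t(m+|U|,n)$ time; this is the source of the claimed bounds.

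Next I would pin down the distances between set vertices. Since $G$ is bipartite with the $u_i$'s on one side and the $v_e$'s on the other, every $u_i$--$u_j$ path has even length, so $\mathrm{dist}_G(u_i,u_j)\in\{2,4,6,\dots\}\cup\{\infty\}$. Moreover $\mathrm{dist}_G(u_i,u_j)=2$ iff $u_i$ and $u_j$ have a common neighbour $v_e$, i.e. iff $e\in S_i\cap S_j$ for some $e$; and if $S_i\cap S_j=\emptyset$ then $\mathrm{dist}_G(u_i,u_j)\ge 4$ (possibly $\infty$, when $u_i$ and $u_j$ lie in different components). To answer a set-intersection query $(i,j)$ I would ask the oracle for the $(2-\epsilon)$-approximate distance $\widehat d$ between $u_i$ and $u_j$ and report ``intersecting'' precisely when $\widehat d<4$.

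Correctness follows from a gap argument: if $S_i\cap S_j\neq\emptyset$ then the true distance is $2$, so $\widehat d\le(2-\epsilon)\cdot 2=4-2\epsilon<4$; if $S_i\cap S_j=\emptyset$ then the true distance is at least $4$, so $\widehat d\ge 4$. Hence the test $\widehat d<4$ decides intersection exactly, in time $t(m+|U|,n)$. If the given oracle requires a connected input or does not support the value $\infty$, one can add a single auxiliary vertex adjacent to every $v_e$: this keeps $|E|$ linear in $n$ and only creates extra $u_i$--$u_j$ paths of length $4$, so the gap is preserved.

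The only delicate point is this gap, and it is exactly why the multiplicative factor must be strictly below $2$: a genuine $2$-approximation could return $4$ on a distance-$2$ pair and would be useless here. The same reason is what makes the additive-error variant of Sect.~\ref{sec:Distance Oracle with Constant Additive Error} require extra work — a constant additive slack $c$ could inflate the returned value past the threshold — so there one first blows up all distances (e.g. by subdividing each edge into a path of suitable constant length, scaling the ``$2$'' and the ``$\ge 4$'' apart) before the same argument applies. For the present theorem nothing beyond the bipartite gadget and the strict inequality $2(2-\epsilon)<4$ is needed.
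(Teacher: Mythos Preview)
Your proof is correct and follows essentially the same approach as the paper: build the set/element bipartite incidence graph, observe that set vertices are at distance $2$ iff the sets intersect and at distance $\ge 4$ otherwise, and use the strict multiplicative gap $2(2-\epsilon)<4$ to separate the two cases with the oracle. The only cosmetic differences are your threshold of $4$ versus the paper's $4-\epsilon$ (both valid) and your extra remarks on connectivity and on why the bound must be strictly below $2$, which the paper omits.
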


\begin{proof}
For the set intersection problem we are
given a database $D$ consisting of $m$ sets drawn from universe $U$,
$S_1, \ldots, S_n \subseteq U$.
We denote $n$ to be the input size, i.e., $n = \sum_{i=1}^m S_i$.

We construct a bipartite graph with two disjoint sets of vertices:
$V_1$ with vertices for each set in $D$ and $V_2$ with vertices for each element in $U$.
Hence, $|V_1| = m$ and $|V_2| = |U|$.
The edges between $V_1$ and $V_2$ are simple,
if an element $e \in U$ belongs to a set $s$ then there is an edge between the corresponding vertices in the bipartite graph.
Because this graph is a bipartite graph it is simple that the distance between each two vertices must be even.
We can see that if two sets have any intersection between them,
the distance between the corresponding vertices is $2$.
The number of edges on this graph is bounded by $n$.
We construct a distance oracle for this graph which answers $(2-\epsilon)$-approximate distance queries
using $s(m+|U|, n)$ space.

Given two sets $S_i,S_j$ we would like to calculate if there is any intersection between them.
To answer that we retrieve the approximate distance between
the corresponding vertices of $S_i$ and $S_j$ in the bipartite graph.
Because both the vertices are in $V_1$
if the approximate distance is less than $4-\epsilon$,
the exact distance must be $2$ because the distance must be even.
This means that there is an element $e$ that has an edge to either $v_i$ and $v_j$,
therefore, there is an intersection between $S_i$ and $S_j$.
Otherwise, there is no such an element,
hence, there is no intersection between the sets.
Therefore, we can answer the set intersection problem in $t(m+|U|, n)$ query time
using $s(m+|U|, n)$ space.
\qed
\end{proof}

\section{Distance Oracle with Constant Additive Error}\label{sec:Distance Oracle with Constant Additive Error}
In this section we extend the reduction to $(2-\epsilon)$-approximate distance oracle with constant additive error.
We prove that set intersection is harder than approximate distance oracle even for distance oracle with constant additive error.
We denote a distance oracle with multiplicative error $d$
and additive error $c$ as $(d, c)$-approximation distance oracle.

\begin{theorem}
Let $G = (V, E)$ be a sparse graph.
Given a distance oracle that answers $(2-\epsilon, c)$-approximate distance queries
using $s(|V|, |E|)$ space with $t(|V|, |E|)$ query time,
we can solve the set intersection problem using $s(m+|U|, n)$ space with $t(m+|U|, n)$ query time.
\end{theorem}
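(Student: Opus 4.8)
The plan is to generalize the construction from Theorem~\ref{thm:basic reduction} by artificially inflating all distances in the bipartite graph so that the constant additive error $c$ becomes negligible relative to the multiplicative gap we are trying to exploit. In the basic reduction the two candidate distances between vertices of $V_1$ were $2$ (intersection) and $\geq 4$ (no intersection), and a $(2-\epsilon)$-multiplicative oracle could distinguish them because $2\cdot(2-\epsilon) = 4-2\epsilon < 4$. With an additive error $c$ the reported value could be as large as $2(2-\epsilon)+c$, which may exceed $4$, so the two cases are no longer separable. The fix is to scale: replace each edge of the bipartite graph by a path of length $L$ (equivalently, assign weight $L$ to every edge, if weighted oracles are allowed; otherwise subdivide each edge into $L$ unit-length edges). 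Then intersecting pairs are at distance $2L$ and non-intersecting pairs are at distance $\geq 4L$.

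First I would carry out this subdivision and bound the size of the new graph: subdividing $n$ edges into paths of length $L$ adds $(L-1)n$ new vertices and turns $n$ edges into $Ln$ edges, so the graph has $m+|U|+(L-1)n$ vertices and $Ln$ edges, and it is still sparse for constant $L$. Second, I would choose $L$ large enough that the multiplicative-plus-additive error still separates $2L$ from $4L$: we need $2L(2-\epsilon) + c < 4L$, i.e. $c < 2\epsilon L$, i.e. $L > c/(2\epsilon)$. Since $c$ and $\epsilon$ are constants, a constant $L := \lceil c/(2\epsilon)\rceil + 1$ suffices. Third, I would run the query exactly as before: build a $(2-\epsilon,c)$-oracle on the subdivided graph, and on a query $(i,j)$ report ``intersection'' iff the oracle's answer is less than $4L$ (using that true distances between $V_1$ vertices are even multiples of $L$, so the true distance is either $2L$ or at least $4L$, and the chosen $L$ guarantees the reported value lands below $4L$ in the first case and at least $4L$... wait—I should double check the no-intersection side too).

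The subtle point, and the one I expect to need the most care, is the ``no intersection'' direction: when the true distance is $\geq 4L$ the oracle may report a value smaller than the truth only through the multiplicative factor $(2-\epsilon) \le 2$ is an \emph{over}-estimate factor, not an under-estimate, so a multiplicative approximation never reports below the true distance; combined with additive error $c \ge 0$ the reported value is at least $4L$. Hence the test ``reported $< 4L$'' is correct on both sides. (If one instead defines the oracle so it may under-report, one would need the separation $2L(2-\epsilon)+c < 4L/(2-\epsilon) - c$ or similar, and would pick $L$ accordingly; I would state the convention explicitly to avoid this ambiguity.) With $L$ a constant, the space is $s(m+|U|+(L-1)n,\ Ln) = s(O(m+|U|+n),\ O(n))$ and the query time is $t(O(m+|U|+n),\ O(n))$, which matches the claimed bounds up to the constant factor $L$ hidden in the big-$O$; I would note that if one prefers the exact form $s(m+|U|,n)$ one can absorb $L$ by a standard rescaling remark, or simply state the result with the $O(\cdot)$ understood. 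This completes the reduction, showing set intersection reduces to a $(2-\epsilon,c)$-approximate distance oracle on sparse graphs.
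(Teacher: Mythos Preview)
Your proposal is correct and follows essentially the same approach as the paper: replace each edge of the bipartite graph by a constant-length path so that the additive error $c$ is absorbed into the inflated $2L$-versus-$4L$ gap, then threshold the oracle's answer. The paper uses path length $\tfrac{2}{\epsilon}(c-1)$ where you take $L=\lceil c/(2\epsilon)\rceil+1$, and your treatment of the threshold inequality, the over-estimate convention, and the constant-factor blowup in $|V|$ and $|E|$ is in fact more careful than the paper's own argument.
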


\begin{proof}
We build a bipartite graph as in the proof of Theorem~\ref{thm:basic reduction}.
But now each edge between $V_1$ and $V_2$ will be a path of $\frac{2}{\epsilon}(c-1)$ vertices.
This adds a constant number of vertices and edges, hence,
the space cost of a distance oracle for the bipartite graph is still $s(m+|U|, n)$ space.

Given two sets $S_i, S_j$ we would like to calculate if there is any intersection between them.
To answer that we retrieve the approximate distance between
the corresponding vertices of $S_i$ and $S_j$ in the bipartite graph.

If the approximated distance is less than $\frac{4c}{\epsilon}$,
because the distance oracle is $(2-\epsilon, c)$-approximate distance oracle,
it means that the exact distance has to be less than
$\frac{4c}{\epsilon} \times (2-\epsilon) + c = \frac{8c}{\epsilon}-4c+c$.
The minimal distance between vertex in $V_1$ and vertex in $V_2$ is $\frac{2}{\epsilon}(c-1)$,
hence,
the distance is exactly $\frac{4}{\epsilon}(c-1)$ and therefore there is an intersection.
If the approximated distance is greater than $\frac{4c}{\epsilon}$,
there would be no intersection because the distance is too high.

By that we solved the set intersection problem in $t(m+|U|, n)$ time 
using $s(m+|U|, n)$ space.

\qed
\end{proof}

\section{Conclusions}\label{sec:Conclusions}
In this paper we showed that set intersection is harder than distance oracle on sparse graphs.
We showed how the set intersection problem can be solved using $(2-\epsilon)$-approximate distance oracle with constant additive error.

\bibliographystyle{splncs}
\bibliography{intersectionEmptyBib}

\end{document}